\documentclass[11pt]{article}

\usepackage[letterpaper,margin=1in]{geometry}
\usepackage[T1]{fontenc}
\usepackage{lmodern}
\usepackage{amsmath,amssymb,amsthm,mathtools}
\usepackage{booktabs,array}
\usepackage{graphicx}
\usepackage{microtype}
\usepackage{enumitem}
\usepackage{xcolor}
\usepackage{hyperref}
\usepackage{caption}

\hypersetup{
  colorlinks=true,
  linkcolor=black,
  citecolor=black,
  urlcolor=blue!55!black,
  pdftitle={Independence-System Realisations in Single-Source Unsplittable Flow},
  pdfauthor={Koyar Afrasyab}
}

\newtheorem{theorem}{Theorem}
\newtheorem{corollary}[theorem]{Corollary}
\newtheorem{lemma}[theorem]{Lemma}

\theoremstyle{definition}
\newtheorem{definition}[theorem]{Definition}

\newcommand{\R}{\mathbb{R}}
\newcommand{\calI}{\mathcal{I}}
\newcommand{\calF}{\mathcal{F}}

\setlist[itemize]{leftmargin=1.6em,itemsep=2pt,topsep=4pt}
\setlist[enumerate]{leftmargin=1.8em,itemsep=2pt,topsep=4pt}
\title{\textbf{Independence-System Realisations in Single-Source Unsplittable Flow}}
\author{Koyar Afrasyab\\
\small\href{mailto:koyar@kinvectum.com}{koyar@kinvectum.com}}
\date{27 July 2026}

\begin{document}
\maketitle

\begin{abstract}
Additive-congestion constraints in single-source unsplittable flow can enforce stable-set structure. This note isolates and generalises that mechanism. We introduce a path-closed notion of realising an independence system by the zero-cost choices of primary terminals in a directed acyclic flow instance. The definition quantifies over every directed source--terminal path and therefore remains valid under prefix borrowing, suffix splicing, and hybrid routes.

Our main result extends the triangle mechanism: every finite loopless independence system has a polynomial-size realisation, measured in the incidence size of its minimal forbidden sets. Hence every finite simple graph, and more generally every hypergraph independence system without singleton forbidden hyperedges, is representable by an acyclic single-source gadget. We then specialise the construction to odd cycles. For \(C_{2k+1}\), a uniform rational family produces a fractional cheap-selection vector that violates the odd-cycle inequality. A potential shift converts a signed connector separator into nonnegative arc costs and gives the exact cost-preserving additive-congestion threshold
\[
\tau=1-bq.
\]
Within the symmetric family, the supremum threshold is
\(\frac{k+2}{2(k+1)}\), which tends to \(1/2\).
For \(C_5\), an exact certificate independently derives all source--terminal paths and enumerates all \(3^{10}=59{,}049\) unsplittable routings using rational arithmetic.
\end{abstract}

\section{Introduction}

In the single-source unsplittable-flow problem, a common source supplies several terminals with unrelated positive demands. A fractional flow may split a terminal's demand over several paths, whereas an unsplittable routing must choose a single source--terminal path for each demand. The classical theorem of Dinitz, Garg, and Goemans shows that a feasible fractional flow can always be rounded while increasing the load on each arc by at most the maximum demand \cite{dgg1999}. Goemans subsequently conjectured that the same additive guarantee could be achieved without increasing total cost.

On 22 July 2026, Dmitry Rybin publicly announced a seven-vertex counterexample obtained in a shared GPT-5.6 Pro investigation \cite{rybin2026,rybintranscript2026}. Its three demands are \(15,10,15\); the exhibited fractional flow has cost \(58\), while every additive-\(15\)-good unsplittable routing has cost at least \(60\). The machine-readable copy in this repository is a reproduction of that instance, not an original construction of this article. Our independent exact verifier derives all six source--terminal paths from the arc list and checks all eight routings. Thus the refutation used here follows from the reproduced finite certificate and its independent exact audit, not from the authority or eventual status of the announcement.

The certificate exhibits a triangle conflict mechanism and motivates a structural question: which discrete choice systems can be encoded by the additive-congestion inequalities of a single-source instance? A primary terminal is interpreted as ``selected'' when it uses its unique zero-cost path. Auxiliary terminals enforce incompatibilities through ordinary shared arc loads. The central difficulty is path closure: a valid construction cannot prescribe a short menu of routes and ignore other paths that the graph itself creates. Any source prefix may combine with any reachable suffix, so a sound proof must survive all borrowed and hybrid routes.

\subsection{Contributions}

The main contributions are as follows.

\begin{enumerate}
  \item We formalise the mechanism through \emph{strong realisation}, which quantifies over every actual directed path in the gadget rather than over a prescribed path decomposition.
  \item We give an explicit acyclic construction that strongly realises every finite loopless independence system. Its size is linear in the ground-set size plus the total incidence size of the minimal forbidden sets.
  \item We derive a uniform odd-cycle family whose fractional cheap-selection vector violates the stable-set odd-cycle inequality.
  \item We construct nonnegative arc costs and determine the exact cost-preserving additive-congestion threshold of the family.
  \item We independently certify the provenance-labelled triangle instance and supply an exact \(C_5\) certificate, using all-path, all-routing verifiers based solely on rational arithmetic.
\end{enumerate}

\subsection{Scope}

Rybin's triangle already refutes the additive-\(D\) cost conjecture. The odd-cycle family here is not a stronger refutation: its exact threshold is strictly below \(D\) and approaches \(D/2\). Its purpose is to generalise the underlying stable-set mechanism and to quantify one symmetric family. The optimisation statement in Section~\ref{sec:optimisation} is scoped to that family; it neither reaches \(D\) nor rules out stronger unrelated gadgets. The universal construction is elementary and closely related to standard conflict-encoding ideas. We therefore present this work as a research note and do not claim broad priority for the general mechanism absent a comprehensive expert literature review.

\section{Related work}

Dinitz, Garg, and Goemans proved that every single-source fractional flow admits an unsplittable routing whose arc load is at most the fractional load plus the largest demand \cite{dgg1999}. Morell and Skutella developed a short proof and a lower-bound counterpart, and formulated a stronger simultaneous-deviation perspective \cite{morell2022}. Majthoub Almoghrabi, Skutella, and Warode proved strong decomposition results for series-parallel digraphs \cite{almoghrabi2025}. Swamy, Traub, Vargas Koch, and Zenklusen showed how an unweighted error-bounded variant implies a relaxed cost guarantee \cite{swamy2026}.

Traub, Vargas Koch, and Zenklusen proved that planar acyclic instances admit the simultaneous upper and lower additive-\(D\) bounds without costs, and the cost-preserving conclusion with additive-\(2D\) bounds \cite{traub2026}. The reproduced seven-vertex graph is itself planar: its underlying undirected graph suppresses to \(K_4\) after the three degree-two terminals are removed. The exact checker verifies this subdivision certificate. Consequently, the audited instance rules out the cost-preserving additive-\(D\) statement even for planar acyclic graphs, but it does not contradict either the cost-free additive-\(D\) theorem or the cost-preserving additive-\(2D\) theorem.

The publicly announced triangle instance is the immediate antecedent of this note \cite{rybin2026,rybintranscript2026}. Its shared spine enforces a triangle conflict system: each zero-cost detour represents a selected vertex, and any pair overloads one spine arc under the additive-\(D\) bounds. Our construction replaces the nested triangle spine with one incidence resource per element--forbidden-set pair and one auxiliary demand per minimal forbidden set. This attribution identifies the methodological seed; the results below concern the resulting general construction and its certificates.

The gadgets also touch graph-representation and conflict-flow literatures. EPT and EPG representations define adjacency through intersections among chosen paths \cite{golumbic1985}. Minimum-cost flow with conflicts adds an external conflict relation on arcs \cite{suvak2021}. Here the forbidden system is induced internally by fractional baseline loads, auxiliary demands, and additive arc inequalities. Encoding conflicts with shared capacities is a familiar modelling pattern, so the contribution claimed here is limited to the stated path-closed construction, proof, threshold calculation, and exact certificates. A targeted search found no directly equivalent theorem, but that search is not an exhaustive priority determination.

\section{Model and definitions}\label{sec:model}

Let \(G=(V,A)\) be a finite directed acyclic graph with source \(s\), terminal set \(T\), and positive terminal demands \(d_t\). A feasible fractional flow \(x\in\R_{\ge0}^{A}\) has divergence \(-\sum_{t\in T}d_t\) at \(s\), divergence \(d_t\) at terminal \(t\), and zero divergence at every other vertex. Arc costs are nonnegative and measured per unit of flow: \(c\in\R_{\ge0}^{A}\). Let
\[
D:=\max_{t\in T}d_t.
\]

An unsplittable routing \(P=(P_t)_{t\in T}\) chooses a directed \(s\)--\(t\) path for each terminal. Its load on an arc \(a\) is
\[
y_P(a)=\sum_{t:\,a\in P_t}d_t,
\]
and its cost is \(c^Ty_P\).

\begin{definition}[Additive goodness]
For \(\lambda\ge0\), a routing \(P\) is \(\lambda\)-good if
\[
y_P(a)\le x(a)+\lambda D\qquad\text{for every }a\in A.
\]
\end{definition}

We distinguish \emph{primary} terminals from auxiliary terminals. A primary terminal is \emph{cheaply selected} when its chosen path has total cost zero.

\begin{definition}[Strong realisation]
Let \(\calI\subseteq2^E\) be an independence system on a finite ground set \(E\). A flow instance strongly realises \(\calI\) when:
\begin{enumerate}
  \item primary terminals are indexed by \(E\);
  \item every primary terminal has exactly one zero-cost path; and
  \item for every \(S\subseteq E\), there exists a \(1\)-good routing that cheaply selects exactly \(S\) if and only if \(S\in\calI\).
\end{enumerate}
The quantifier ranges over all directed paths in \(G\).
\end{definition}

An independence system is \emph{loopless} when every singleton belongs to it. Equivalently, every inclusion-minimal forbidden set has cardinality at least two. Write \(\calF\) for the antichain of minimal forbidden sets. Then
\[
S\in\calI
\quad\Longleftrightarrow\quad
F\nsubseteq S\ \text{ for every }F\in\calF.
\]

\section{A universal realisation theorem}\label{sec:universal}

We now give the construction in explicit vertex-and-arc form. This makes the acyclicity, path closure, and fractional feasibility transparent.

\subsection{Construction}

Fix \(i\in E\), and order the forbidden sets containing \(i\) as
\[
F_{i,1},F_{i,2},\ldots,F_{i,m_i}.
\]
When \(m_i\ge1\), introduce private vertices \(u_{i,j},v_{i,j}\) for \(j=1,\ldots,m_i\). Add:
\begin{itemize}
\item a start arc \(s\to u_{i,1}\);
\item incidence resources \(r_{F_{i,j},i}:u_{i,j}\to v_{i,j}\);
\item connectors \(v_{i,j}\to u_{i,j+1}\) for \(j<m_i\);
\item a final arc \(v_{i,m_i}\to t_i\);
\item a direct arc \(s\to t_i\); and
\item approaches \(s\to u_{i,j}\) for \(j\ge2\).
\end{itemize}
If \(m_i=0\), retain a private two-arc chain from \(s\) to \(t_i\) together with the direct arc.

For each \(F\in\calF\), create an auxiliary terminal \(a_F\). Whenever \(F=F_{i,j}\), add an exit
\[
v_{i,j}\to a_F.
\]
Thus any path ending at \(a_F\) has a final exit associated with some incidence \((F,i)\), and immediately before that exit it must traverse \(r_{F,i}\).

All chain, incidence, connector, final, and exit arcs have zero cost. Direct arcs and approaches have any strictly positive cost. The graph is acyclic: place \(s\) first, order the private chain vertices from left to right, and place all primary and auxiliary terminals last.

\subsection{Fractional flow}

Set \(q=1/4\), and give every terminal demand \(1\), so \(D=1\). For each primary \(i\), send \(q\) units along its full chain and \(1-q\) along its direct arc. For each auxiliary \(a_F\), split one unit uniformly among the \(|F|\) incidences. At a first incidence use the chain start; at a later incidence use its private approach. In either case, traverse only \(r_{F,i}\) and then exit to \(a_F\).

The important fractional loads are
\[
x(r_{F,i})=q+\frac1{|F|}<1
\]
and
\[
x(\text{connector})=q.
\]
All other loads follow directly from the described path flow, so flow conservation holds at every vertex.

\begin{lemma}[Exit dominator]\label{lem:dominator}
Every directed \(s\)--\(a_F\) path contains an incidence resource \(r_{F,i}\) immediately before its final exit.
\end{lemma}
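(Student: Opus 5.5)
The argument is a local inspection of in-arcs, so I would simply read off from the construction which arcs can enter an auxiliary terminal and which arcs can enter the tail of such an arc. Let \(P\) be any directed \(s\)--\(a_F\) path. Since \(a_F \neq s\), \(P\) has a last arc, and this arc enters \(a_F\).

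The first step is to list all arcs with head \(a_F\). By the construction, the only arcs created with head \(a_F\) are the exits \(v_{i,j}\to a_F\) with \(F=F_{i,j}\). No start, incidence, connector, final, direct, approach or chain arc has an auxiliary terminal as its head. So the last arc of \(P\) is an exit \(v_{i,j}\to a_F\) for some incidence with \(F_{i,j}=F\). In particular, the second-to-last vertex of \(P\) is \(v_{i,j}\).

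The second step is to list all arcs with head \(v_{i,j}\). The vertices \(u_{i,j},v_{i,j}\) are private to the pair \((i,j)\). Start arcs, approaches and connectors all end at \(u\)-vertices, final arcs end at \(t_i\), and exits end at auxiliary terminals. Hence the only arc entering \(v_{i,j}\) is the incidence resource \(r_{F_{i,j},i}=r_{F,i}\colon u_{i,j}\to v_{i,j}\).

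The third step concludes. Since \(v_{i,j}\neq s\), the path \(P\) must enter \(v_{i,j}\) by some arc, and by the second step that arc is \(r_{F,i}\). It is the arc immediately preceding the final exit, which proves the lemma.

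The only real obstacle is making the in-arc inventory in the second step airtight. In particular, I must check that no connector, approach or other arc is ever directed into a \(v\)-vertex. I also must handle the degenerate \(m_i=0\) chain: its private vertices carry no exits, so it is irrelevant. I would present this inventory explicitly from the bullet list, arc type by arc type, so that the statement visibly quantifies over all directed paths, including borrowed prefixes and hybrid routes.
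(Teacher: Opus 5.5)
Your proposal is correct and follows the paper's proof. Both arguments observe that the only arcs entering \(a_F\) are the exits \(v_{i,j}\to a_F\) with \(F_{i,j}=F\), and that the only arc entering \(v_{i,j}\) is the resource \(r_{F,i}\), so every path, including borrowed or hybrid ones, traverses \(r_{F,i}\) immediately before its final exit; your arc-by-arc inventory simply makes the paper's two in-arc claims explicit.
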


\begin{proof}
The only arcs entering \(a_F\) are the exits \(v_{i,j}\to a_F\) for incidences with \(F_{i,j}=F\). The only arcs entering the corresponding vertex \(v_{i,j}\) are the resource arcs \(r_{F,i}:u_{i,j}\to v_{i,j}\). Hence any path using that exit must first traverse \(r_{F,i}\). A borrowed prefix may add earlier resources but cannot bypass this last one.
\end{proof}

\begin{lemma}[Unique cheap primary path]\label{lem:unique}
For each primary terminal \(t_i\), its full private chain is its unique zero-cost path.
\end{lemma}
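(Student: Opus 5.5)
The plan is to classify every arc entering \(t_i\) and then follow an arbitrary zero-cost path backwards from \(t_i\) to \(s\). Each step back is forced, because every non-private arc that could offer an alternative is either positively costed or leaves a terminal.

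First, the arcs entering \(t_i\) are exactly the direct arc \(s\to t_i\) and the final arc \(v_{i,m_i}\to t_i\). In the case \(m_i=0\), the second one is replaced by the last arc of the private two-arc chain. The construction adds no other arc with head \(t_i\): exits go to auxiliary terminals, and connectors, approaches and resources stay inside the chain of \(i\). The direct arc has strictly positive cost, and costs are nonnegative, so a zero-cost \(s\)--\(t_i\) path must end with the final arc.

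Second, walk backwards from \(v_{i,m_i}\) by induction on \(j\) descending.

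\begin{itemize}
\item The only arc entering \(v_{i,j}\) is the resource \(r_{F_{i,j},i}\), since exits leave \(v_{i,j}\) and never enter it. So the path contains \(u_{i,j}\to v_{i,j}\).
\item For \(j\ge2\), the arcs entering \(u_{i,j}\) are the connector \(v_{i,j-1}\to u_{i,j}\) and the approach \(s\to u_{i,j}\). The approach has positive cost, so a zero-cost path uses the connector.
\item For \(j=1\), the only arc entering \(u_{i,1}\) is the start arc from \(s\).
\end{itemize}

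Terminals, including auxiliary ones, have no outgoing arcs, so no borrowed prefix can pass through another terminal. Other primaries' private vertices have no arcs into the chain of \(i\), so no prefix can enter from there either. Hence the path is forced to be exactly the full private chain. For \(m_i=0\), the same argument applies directly to the two-arc chain, whose middle vertex has only the chain arc entering it.

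Third, the full chain does have zero cost, because it consists only of start, resource, connector and final arcs, all of which have cost \(0\). So the chain is the unique zero-cost path.

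The main obstacle is the exhaustive in-arc classification: verifying that no arc added for another element or another forbidden set enters the chain of \(i\). This rests on the privacy of the vertices \(u_{i,j},v_{i,j}\), and I would check it against the bullet list of the construction.
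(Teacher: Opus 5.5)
Your proof is correct and follows essentially the same route as the paper: both rule out the direct arc and the approaches by their positive cost, and both use the fact that no arc from another chain or from a terminal enters the private chain of \(i\). Your backward in-arc induction from \(t_i\) just spells out the path-closure check that the paper states in one line.
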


\begin{proof}
The direct arc has positive cost. Any path entering the chain downstream uses a positive-cost approach. There are no arcs from another chain or from an auxiliary terminal into the chain, so the only remaining \(s\)--\(t_i\) path is the full chain, whose arcs all have zero cost.
\end{proof}

\begin{theorem}[Universal realisation]\label{thm:universal}
Every finite loopless independence system has a strong realisation in an acyclic single-source unsplittable-flow instance. The construction has size
\[
O\!\left(|E|+\sum_{F\in\calF}|F|\right).
\]
\end{theorem}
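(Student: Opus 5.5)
The plan is to verify three things in turn: the routine structural claims (acyclicity, size, fractional feasibility), the implication ``\(S\notin\calI\Rightarrow\) no \(1\)-good routing cheaply selects exactly \(S\)'', and the converse. Acyclicity was already observed with the construction. For size, each element \(i\) contributes \(O(1+m_i)\) vertices and arcs, and each incidence \((F,i)\) contributes \(O(1)\) vertices and arcs: a resource, a connector, an approach and an exit. Since \(\sum_i m_i=\sum_{F\in\calF}|F|\), the bound \(O(|E|+\sum_F|F|)\) follows. Flow conservation holds because every unit described is a path flow. Condition (2) of strong realisation is Lemma~\ref{lem:unique}.

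The key numerical observation is as follows. All demands equal \(1\), so every unsplittable load \(y_P(a)\) is an integer. On each resource, \(x(r_{F,i})+1=\tfrac54+\tfrac1{|F|}\le\tfrac74<2\), using \(|F|\ge2\); this is exactly where looplessness enters. On each connector, \(x+1=\tfrac54<2\). Hence in any \(1\)-good routing every resource and every connector carries at most one terminal.

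For the hard direction, suppose \(S\notin\calI\), so some \(F\in\calF\) satisfies \(F\subseteq S\), and take any \(1\)-good routing that cheaply selects exactly \(S\). By Lemma~\ref{lem:unique}, each \(i\in F\) routes along its full private chain and therefore traverses \(r_{F,i}\). The auxiliary terminal \(a_F\) uses some directed path. By Lemma~\ref{lem:dominator}, that path traverses some \(r_{F,i'}\), and by construction of the exits \(i'\in F\subseteq S\). Then \(r_{F,i'}\) carries both \(t_{i'}\) and \(a_F\), so its load is at least \(2\), a contradiction. This is the step I expect to carry the weight of path closure. It works only because Lemma~\ref{lem:dominator} quantifies over every \(s\)--\(a_F\) path, including borrowed prefixes, and because integrality turns the fractional slack into a hard ``at most one'' constraint.

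For the converse, let \(S\in\calI\). We route each \(i\in S\) along its chain and each \(i\notin S\) along its direct arc. For each \(F\), since \(F\nsubseteq S\), we pick \(i_F\in F\setminus S\) and route \(a_F\) to the incidence \(u_{i_F,j}\) with \(F_{i_F,j}=F\), using the start arc if \(j=1\) and the approach otherwise. The path then takes \(r_{F,i_F}\) and exits. Next we check that every arc has load at most \(1\le x(a)+1\).
\begin{itemize}
\item A resource \(r_{F,i}\) is used by \(t_i\) only if \(i\in S\), and by \(a_F\) only if \(i=i_F\notin S\). No other terminal can use it.
\item A start arc \(s\to u_{i,1}\) is likewise used by \(t_i\) only if \(i\in S\), and by \(a_{F_{i,1}}\) only if \(i\notin S\).
\item Connectors, final arcs and direct arcs serve only \(t_i\).
\item Approaches and exits serve only one auxiliary terminal.
\end{itemize}
The case \(m_i=0\) is trivial. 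Finally, each \(i\notin S\) pays positive cost on its direct arc, so the routing cheaply selects exactly \(S\), completing the equivalence in condition (3).
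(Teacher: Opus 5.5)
Your proposal is correct and follows essentially the same route as the paper. In the forward direction, the exit-dominator lemma plus integrality of unit loads forces load \(2\) on some \(r_{F,i'}\) with \(i'\in F\subseteq S\), which exceeds \(x(r_{F,i'})+D\le\tfrac74<2\); in the converse, each \(a_F\) is routed through an unselected endpoint so every arc carries at most one unit, and your arc-by-arc load check and the remark that looplessness enters exactly through \(|F|\ge2\) only make explicit what the paper states more tersely.
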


\begin{proof}
Lemmas~\ref{lem:dominator} and \ref{lem:unique} establish path closure and unique cheap selection.

For the forward direction, suppose a \(1\)-good routing selects every element of some \(F\in\calF\). The auxiliary \(a_F\) must use a final exit associated with some \(i\in F\). By Lemma~\ref{lem:dominator}, its path traverses \(r_{F,i}\). The selected primary \(i\) also traverses this resource, so its integral load is \(2\). However,
\[
x(r_{F,i})+D
=q+\frac1{|F|}+1
\le\frac14+\frac12+1
<2,
\]
contradicting \(1\)-goodness. Thus the selected set contains no minimal forbidden set and belongs to \(\calI\).

Conversely, take \(S\in\calI\). For every \(F\in\calF\), choose \(i_F\in F\) with \(i_F\notin S\). Route each selected primary along its full chain and each unselected primary directly. Route \(a_F\) locally through \(r_{F,i_F}\): use the start path when this is the first incidence of chain \(i_F\), and otherwise use the private approach. No auxiliary shares an incidence resource with a selected primary. Every resource load is therefore at most \(1\), every selected-chain arc has load at most \(1\), and every direct or approach arc has load at most \(1\). Since all fractional loads are nonnegative and \(D=1\), every arc load is at most \(x+D\). The routing is \(1\)-good and selects exactly \(S\).

There are two private vertices and a constant number of arcs per incidence, plus one terminal and a constant number of arcs per ground element or forbidden set, giving the stated size bound.
\end{proof}

\begin{corollary}\label{cor:graphs}
Every finite simple graph is strongly realisable by taking its edges as the minimal forbidden sets. More generally, every finite hypergraph independence system with no singleton forbidden hyperedge is strongly realisable.
\end{corollary}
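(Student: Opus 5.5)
The corollary follows from Theorem~\ref{thm:universal} once we show that each system it names is a finite loopless independence system whose antichain of minimal forbidden sets is the expected family. I will handle the general hypergraph case first and then treat graphs as a special case.

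\emph{Hypergraph case.} Let \(H=(E,\mathcal{H})\) be a finite hypergraph with no singleton hyperedge. Define \(\calI\) as the family of sets \(S\subseteq E\) that contain no hyperedge. This family is closed under subsets, so it is an independence system. We also need \(\emptyset\in\calI\), which requires that \(\emptyset\notin\mathcal{H}\). I will assume this as part of ``hypergraph''; alternatively, hyperedges are nonempty by convention. Every singleton \(\{i\}\) is independent, since the only hyperedges it could contain are \(\emptyset\) or \(\{i\}\), and both are excluded. Hence \(\calI\) is loopless.

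\emph{Minimal forbidden sets.} A set is forbidden exactly when it contains some hyperedge. Therefore the minimal forbidden sets are the inclusion-minimal hyperedges, and we may take \(\calF\) to be this antichain. It satisfies \(\calF\subseteq\mathcal{H}\), and every member has size at least two. The characterisation \(S\in\calI\iff F\nsubseteq S\) for all \(F\in\calF\) now holds. Applying Theorem~\ref{thm:universal} gives an acyclic strong realisation of size \(O(|E|+\sum_{F\in\calF}|F|)\). This is at most \(O(|E|+\sum_{e\in\mathcal{H}}|e|)\).

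\emph{Graph case.} For a finite simple graph, take \(\mathcal{H}\) to be the edge set and \(\calI\) to be the stable sets. Since the graph has no loops, every edge has exactly two endpoints, so there is no singleton hyperedge. Distinct edges are incomparable, so \(\calF\) is exactly the edge set. The realisation size is therefore \(O(|V|+|E|)\).

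\emph{Main obstacle.} The only step needing care is the passage from the hyperedge family to the antichain \(\calF\). If non-minimal hyperedges are kept, they must be discarded before building the gadget, because the construction indexes resources by minimal forbidden sets. Discarding them does not change \(\calI\), since any set containing a non-minimal hyperedge also contains a minimal one.
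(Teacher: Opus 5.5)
Your proposal is correct and takes the same route as the paper: the paper gives no separate proof and treats the corollary as an immediate application of Theorem~\ref{thm:universal}, with \(\calF\) taken to be the edge set (respectively the inclusion-minimal hyperedges). Your additional checks---that the empty set is not a hyperedge, that singletons are independent, and that non-minimal hyperedges can be discarded without changing the independence system---make explicit details the paper leaves implicit.
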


This includes complete graphs, odd holes, odd antiholes, webs, antiwebs, and minimally imperfect graphs. The point is not that these graph classes share a path-intersection representation, but that their stable sets can be encoded by ordinary load inequalities in a single-source flow instance.

\section{A uniform odd-cycle family}\label{sec:cycle}

Fix \(k\ge1\), put \(n=2k+1\), and index vertices modulo \(n\). Let \(e_i=\{i,i+1\}\). On primary chain \(i\), order the incidence for \(e_{i-1}\) first, insert a connector \(h_i\), and place the incidence for \(e_i\) second. Add a private approach from \(s\) to the second incidence.

Choose rational
\[
0<\varepsilon<\frac1{4(k+1)}
\]
and define
\[
T=\frac{k}{2(k+1)},\qquad
t=T+\varepsilon,\qquad
s_0=\frac12+\varepsilon,
\]
\[
b=s_0+t,\qquad q=\frac{t}{b}.
\]
Primary demands are \(b\), edge-auxiliary demands are \(1\), and hence \(D=1\). Send \(t=bq\) units of each primary along its full chain and \(s_0=b(1-q)\) along its direct arc. Split each edge-auxiliary flow equally between the first incidence at \(i+1\) and the approached second incidence at \(i\). Every incidence resource has fractional load \(t+1/2\), and every connector has fractional load \(t\).

\begin{theorem}[Exact stable-set correspondence]\label{thm:cycle}
The \(1\)-good cheap selections of the odd-cycle instance are exactly the stable sets of \(C_n\).
\end{theorem}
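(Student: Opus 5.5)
We mirror the proof of Theorem~\ref{thm:universal}, adapted to demands \(b\) for primaries and \(1\) for edge-auxiliaries. The odd-cycle instance is the construction of Section~\ref{sec:universal} with \(\calF=\{e_0,\dots,e_{n-1}\}\). Each chain has \(m_i=2\) incidences, ordered as \(r_{e_{i-1},i}\), then the connector \(h_i\), then \(r_{e_i,i}\). There is a single approach, into the second incidence. Hence Lemmas~\ref{lem:dominator} and \ref{lem:unique} apply verbatim: every \(s\)--\(a_{e_i}\) path ends with \(r_{e_i,i}\) or \(r_{e_i,i+1}\) followed by the exit, and each primary's full chain is its unique zero-cost path. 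All other paths are covered by these two lemmas, so path closure needs no further argument.

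\textbf{Forward direction.} Suppose a \(1\)-good routing cheaply selects both endpoints of \(e_i\). The auxiliary \(a_{e_i}\) must traverse some \(r_{e_i,j}\) with \(j\in\{i,i+1\}\), and the selected primary \(j\) traverses the same arc. The load on that arc is therefore at least \(b+1\). Its allowance is \(x+D=t+\tfrac12+1\). Since \(b=s_0+t=t+\tfrac12+\varepsilon\), we get \(b+1>t+\tfrac32\), a violation. So no edge is contained in the selected set, i.e.\ the set is stable.

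\textbf{Converse.} Take a stable set \(S\) and route each selected primary along its chain and each unselected primary directly. For each edge \(e_i\), at least one endpoint \(j\) is unselected; route \(a_{e_i}\) through \(r_{e_i,j}\), using the chain start if this is \(j\)'s first incidence (\(j=i+1\)) and the approach if it is the second (\(j=i\)). We must check every arc against \(x+1\):
\begin{itemize}
  \item \emph{Incidence resources.} Each carries either a selected primary (load \(b\)) or at most one auxiliary (load \(1\)), never both. Here \(b\le t+\tfrac32\) holds since \(\varepsilon<1\), and \(1\le t+\tfrac32\) is trivial.
  \item \emph{Connectors \(h_i\).} These carry only \(b\) when \(i\) is selected. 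Auxiliaries never use connectors, since each exits right after one resource. The bound \(b\le t+1\) reduces to \(\tfrac12+\varepsilon\le 1\).
  \item \emph{Start arcs.} A start arc may carry a selected primary \(b\) or one auxiliary using the first incidence, but not both, because the auxiliary only uses incidences of unselected \(j\). Its fractional load is at least \(t\), so either load is fine.
  \item \emph{Direct arcs.} These carry \(b\le s_0+1\).
  \item \emph{Approach arcs.} These carry at most \(1\).
  \item \emph{Final and exit arcs.} These carry at most \(\max(b,1)\) against fractional load \(\ge t\) or \(\ge\tfrac12\), which is fine since \(b\le t+1\).
\end{itemize}
The routing cheaply selects exactly \(S\).

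\textbf{Main obstacle.} The only delicate point is the arithmetic of the asymmetric demands. The forward violation needs \(b>t+\tfrac12\), which is exactly what \(s_0=\tfrac12+\varepsilon>\tfrac12\) provides. The converse needs \(b\le t+1\) on chain arcs shared by no auxiliary. I will verify both inequalities explicitly, check that \(q=t/b\in(0,1)\) so that the fractional flow is valid, and enumerate arc types carefully so that no arc is omitted. The upper bound on \(\varepsilon\) appears to matter only for later sections, not for this correspondence.
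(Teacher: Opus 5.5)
Your proposal is correct and follows the paper's proof. The forward direction uses the exit dominator to force the auxiliary onto a resource shared with a selected endpoint, with excess $s_0-\tfrac12=\varepsilon$. The converse routes each edge auxiliary through an unselected endpoint and checks arc loads, and you carry out that check more explicitly than the paper's ``direct inspection''.

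One small correction to your closing remark: the converse does use an upper bound on $\varepsilon$. You need $\varepsilon\le\tfrac12$ to get $b\le t+1$ on connectors and final arcs, which carry fractional load $t$. The paper's bound $\varepsilon<1/(4(k+1))$ supplies this.
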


\begin{proof}
If adjacent primaries \(i\) and \(i+1\) are both selected, the edge auxiliary for \(e_i\) must traverse an incidence resource of one endpoint. That resource has load \(b+1\), whereas its \(1\)-good allowance is
\[
t+\frac12+1.
\]
The excess is
\[
b+1-\left(t+\frac12+1\right)
=s_0-\frac12
=\varepsilon>0.
\]
Thus a selected set is stable.

Conversely, let \(S\) be stable. For each edge, route its auxiliary through an unselected endpoint, using the private approach when that endpoint's incidence is second. No selected primary shares an incidence resource with an auxiliary. Direct inspection of starts, connectors, approaches, exits, and direct arcs shows that every load is at most its fractional load plus \(D\). Hence the routing is \(1\)-good and selects exactly \(S\).
\end{proof}

The fractional cheap-selection coordinate of every primary is \(q\). Since
\begin{equation}\label{eq:violation}
nq-k=\frac{\varepsilon}{b}>0,
\end{equation}
the vector \((q,\ldots,q)\) violates the odd-cycle inequality
\[
\sum_{i=0}^{n-1}z_i\le k.
\]

\section{Nonnegative costs and an exact threshold}\label{sec:threshold}

To state the result precisely, define the cost-preserving additive-congestion threshold of an instance by
\[
\tau^\star
:=\inf\left\{\lambda\ge0:
\begin{array}{l}
\text{there exists a \(\lambda\)-good unsplittable routing \(P\)}\\[-2pt]
\text{with \(c^Ty_P\le c^Tx\)}
\end{array}\right\}.
\]

Start with a signed weight \(p\) that assigns \(-1/b\) to every connector \(h_i\) and \(0\) to every other arc. Give potential \(0\) to vertices before a connector and \(-1/b\) to vertices after it, including primary and auxiliary terminals. Define reduced costs
\[
c(u,v)=p(u,v)+\pi(u)-\pi(v).
\]
These costs are nonnegative. Equivalently, cost \(1/b\) is placed on each direct primary arc, each approach to a second incidence, and each exit from a first incidence; every other arc has cost zero.

Fractional and integral routings have identical divergence. Consequently the potential terms cancel:
\begin{equation}\label{eq:potential}
c^T(y-x)=p^T(y-x).
\end{equation}

\begin{theorem}[Exact cost-congestion threshold]\label{thm:threshold}
For the uniform odd-cycle family,
\[
\tau^\star=1-t=1-bq.
\]
More precisely, for every \(\lambda<1-t\) and every \(\lambda\)-good unsplittable routing \(y\),
\[
c^Ty\ge c^Tx+nq-k
=c^Tx+\frac{\varepsilon}{b}.
\]
At \(\lambda=1-t\), there is an unsplittable routing of cost at most \(c^Tx\).
\end{theorem}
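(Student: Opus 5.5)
The plan is to reduce everything to connector loads through the potential identity \eqref{eq:potential}. Since \(p\) is \(-1/b\) on each connector \(h_i\) and zero elsewhere, and \(x(h_i)=t\), every unsplittable routing \(y\) satisfies
\[
c^Ty-c^Tx=\frac1b\sum_{i=0}^{n-1}\bigl(t-y(h_i)\bigr)=nq-\frac1b\sum_{i}y(h_i).
\]
So both halves of the theorem become statements about how much demand can cross the connectors. A routing beats the fractional cost exactly when \(\sum_i y(h_i)>nt\).

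\textbf{Lower bound for \(\lambda<1-t\).} First I will classify which terminals can use \(h_i\) at all. There are no arcs between chains, so only primary \(i\) and auxiliaries exiting chain \(i\) can use it. The auxiliary of \(e_{i-1}\) exits at the first incidence of chain \(i\), which lies before \(h_i\), so it never uses \(h_i\). The auxiliary of \(e_i\) exits after the second incidence of chain \(i\). It could reach that incidence by the borrowed prefix start arc, first resource, \(h_i\). This is the path-closure subtlety. In a \(\lambda\)-good routing the allowance on \(h_i\) is \(t+\lambda<1\), so a demand-\(1\) auxiliary cannot cross \(h_i\).

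For primary \(i\), the \(s\)--\(t_i\) paths are the direct arc, the approach path, and the full chain, and only the full chain uses \(h_i\). By Lemma~\ref{lem:unique}, the full chain is used exactly when \(i\) is cheaply selected. Hence \(y(h_i)=b\cdot[i\in S]\), where \(S\) is the selected set, and \(c^Ty-c^Tx=nq-|S|\).

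Since \(\lambda<1-t\le1\), the routing is also \(1\)-good. Theorem~\ref{thm:cycle} then makes \(S\) stable in \(C_n\), so \(|S|\le k\). Combining this with \eqref{eq:violation} gives \(c^Ty\ge c^Tx+nq-k=c^Tx+\varepsilon/b\).

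\textbf{Attainment at \(\lambda=1-t\).} At this value the connector allowance becomes exactly \(1\), so the borrowed-prefix route is no longer blocked. I will select nothing. Route every primary along its direct arc, and route the auxiliary of \(e_i\) along start arc, first resource, \(h_i\), second resource, exit. Then \(y(h_i)=1\) for every \(i\), so \(c^Ty-c^Tx=\frac1b(nt-n)<0\) because \(t<1\).

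It remains to check \((1-t)\)-goodness arc by arc. Loads on starts, resources, connectors and exits are at most \(1\). On the connectors this meets \(t+\lambda=1\) exactly. On the other arcs the fractional loads are nonnegative and \(\lambda\ge1/2\). The direct arc carries \(b=s_0+t\) against the allowance \(s_0+1-t\), which needs \(t\le1/2\). This follows from
\[
t<\frac{k}{2(k+1)}+\frac{1}{4(k+1)}=\frac{2k+1}{4(k+1)}<\frac12.
\]

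Together the two halves give \(\tau^\star=1-t=1-bq\). The main obstacle is the exhaustive classification of paths through \(h_i\), including borrowed prefixes. I will handle it via the in-arc structure of each chain, as in Lemma~\ref{lem:dominator}.
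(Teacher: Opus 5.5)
Your proposal is correct and follows the paper's proof essentially step for step. The potential identity reduces everything to connector loads, and the allowance $t+\lambda<1$ blocks borrowed auxiliary routes, so $y(h_i)=b\cdot[i\in S]$. Stability then gives $|S|\le k$; you obtain it by monotonicity ($\lambda$-good implies $1$-good) from Theorem~\ref{thm:cycle}, where the paper recomputes the overload $b+1>\tfrac32$ directly. Finally, the same all-direct-primaries, all-borrowed-auxiliaries routing witnesses $\lambda=1-t$.

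One small repair is needed in the final goodness check. Nonnegativity of $x$ together with $\lambda\ge\tfrac12$ does not by itself cover the unit auxiliary load on starts, resources and exits. You need that these arcs carry fractional load $t+\tfrac12$ or $\tfrac12$. On the exits, the condition $1\le\tfrac12+1-t$ is then a second place, besides the direct arc, where $t\le\tfrac12$ is used.
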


\begin{proof}
A borrowed auxiliary route to a second incidence uses connector \(h_i\) and places demand \(1\) on it. Because \(x(h_i)=t\), such a route is infeasible whenever
\[
1>t+\lambda,
\]
that is, whenever \(\lambda<1-t\). In this range, connector \(h_i\) has integral load \(b\) exactly when primary \(i\) is selected, and zero otherwise.

If two adjacent primaries were selected, their edge auxiliary could not borrow a connector and would have to share an incidence resource with one endpoint. Its load would be \(b+1\), while
\[
t+\frac12+\lambda
<t+\frac12+1-t
=\frac32
<b+1.
\]
Hence the selected set is stable and has cardinality at most \(k\). Therefore
\[
p^Ty\ge-k,
\qquad
p^Tx=-nq.
\]
Equation~\eqref{eq:potential} and \eqref{eq:violation} give
\[
c^T(y-x)=p^T(y-x)\ge nq-k=\frac{\varepsilon}{b}.
\]

At \(\lambda=1-t\), route every primary directly. Route each edge auxiliary along the full borrowed prefix of one primary chain to its second incidence. Every connector then carries exactly
\[
1=t+(1-t),
\]
and all other bounds hold because \(t<1/2\). The routing costs \(n\), while
\[
c^Tx=n\left(1-q+\frac1b\right)>n.
\]
Thus a cost-preserving routing first appears at \(\lambda=1-t\).
\end{proof}

\section{Optimisation within the symmetric family}\label{sec:optimisation}

Write \(t=bq\) and \(s_0=b(1-q)\). Exact stable-set realisation requires \(s_0>1/2\), while violation of the odd-cycle inequality requires \(nq>k\). Since \(n=2k+1\),
\[
(k+1)t>ks_0>\frac{k}{2}.
\]
It follows that
\[
\tau^\star=1-t<\frac{k+2}{2(k+1)}.
\]
Our \(\varepsilon\)-family attains
\[
\tau^\star
=\frac{k+2}{2(k+1)}-\varepsilon.
\]
Therefore the exact supremum within this uniform symmetric parameterisation is
\[
\tau_k^{\mathrm{sup}}
=\frac{k+2}{2(k+1)}
=\frac12+\frac1{2(k+1)},
\qquad
\lim_{k\to\infty}\tau_k^{\mathrm{sup}}=\frac12.
\]

\section{An exact \(C_5\) certificate}\label{sec:c5}

For \(k=2\), choose \(\varepsilon=1/120\). Table~\ref{tab:c5} records the resulting parameters.

\begin{table}[ht]
\centering
\caption{Exact parameters of the \(C_5\) certificate.}
\label{tab:c5}
\begin{tabular}{@{}cccccc@{}}
\toprule
\(b\) & \(q\) & \(t=bq\) & \(\tau^\star\) & \(5q-2\) & \(c^Tx\)\\
\midrule
\(\frac{17}{20}\) &
\(\frac{41}{102}\) &
\(\frac{41}{120}\) &
\(\frac{79}{120}\) &
\(\frac1{102}\) &
\(\frac{905}{102}\)\\
\bottomrule
\end{tabular}
\end{table}

The machine-readable DAG has \(31\) vertices and \(45\) arcs. Every one of the five primary and five auxiliary terminals has exactly three actual paths, so the verifier enumerates
\[
3^{10}=59{,}049
\]
unsplittable routings. It derives the path sets from the arc list and does not trust a prescribed decomposition.

\begin{figure}[p]
\centering
\includegraphics[width=0.72\textwidth]{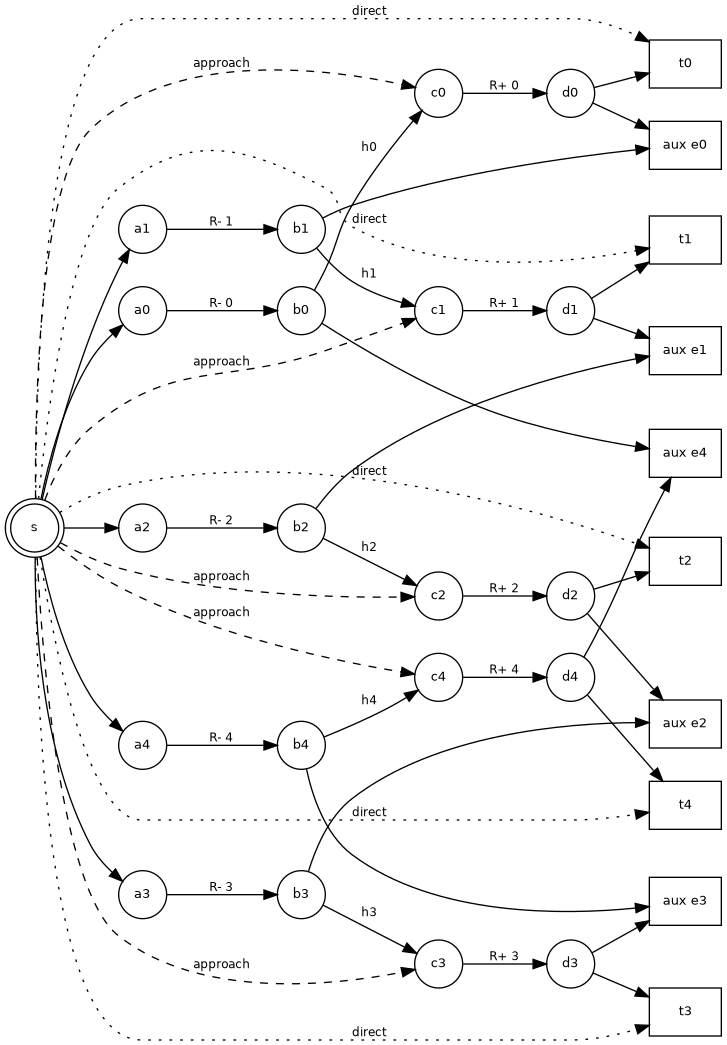}
\caption{The exact \(C_5\) DAG. Solid arcs are zero-cost chain, resource, connector, or exit arcs; dashed approaches and dotted direct arcs carry positive cost. Labels \(R^-_i\), \(h_i\), and \(R^+_i\) denote the first resource, connector, and second resource of primary chain \(i\).}
\label{fig:c5}
\end{figure}

The independent verifier establishes:
\begin{itemize}
  \item fractional feasibility and acyclicity;
  \item exactly one zero-cost path for each primary terminal;
  \item exactly the \(11\) stable sets of \(C_5\) as cheap masks at \(\lambda=1\);
  \item at \(\lambda=7/12\), minimum integral cost \(151/17\), exceeding \(c^Tx=905/102\) by exactly \(1/102\);
  \item at \(\tau^\star=79/120\), a routing of cost \(5<c^Tx\);
  \item rejection of mutations that remove an incompatibility, add a bypass, create a cheap hybrid primary route, or erase the odd-cycle violation.
\end{itemize}

All arithmetic uses Python's exact \texttt{Fraction} type. The finite certificate supports the general proof but does not replace it: exhaustive enumeration is restricted to \(C_5\), while the universal and odd-cycle theorems are symbolic.

\section{Discussion and limitations}

The universal construction shows that additive load inequalities can encode arbitrary finite independence systems, provided singleton selection remains feasible. This expressive power comes from three ingredients: a private zero-cost chain for each decision, an incidence resource for each participation in a minimal obstruction, and an auxiliary unit demand for each obstruction. The exit-dominator property makes the mechanism robust to path splicing.

The construction size is polynomial in an explicit minimal-forbidden-set description, not necessarily in a more compressed oracle or graph representation. An independence system can have exponentially many minimal forbidden sets, and the theorem does not avoid that representation cost.

The exact threshold is proved only for the uniform symmetric odd-cycle family. It does not imply that \(1/2\) is a universal barrier and, unlike Rybin's triangle, it does not reach the additive-\(D\) boundary. The family therefore supplies a structural generalisation and a scoped calculation rather than a stronger counterexample. The construction is short and the underlying conflict-encoding idea may be folklore; no claim of definitive novelty or flagship scope is made. Finally, the general proof has not been formalised in a proof assistant. The exact scripts machine-check the finite certificates and rational identities, while the universal theorem remains a conventional mathematical proof.

\section{Conclusion}

Additive-congestion inequalities can enforce a stable-set obstruction: every finite loopless independence system can be embedded into the cheap-choice structure of an acyclic single-source unsplittable-flow instance in a manner closed under all actual graph paths. Odd cycles turn the qualitative construction into a scoped cost-congestion calculation, and the \(C_5\) certificate demonstrates that the generalised phenomenon is reproducible at finite scale.

The construction suggests two natural directions. First, one may seek compressed gadgets for structured forbidden systems. Second, one may investigate which additional graph restrictions---planarity, bounded genus, bounded treewidth, or series-parallel structure---limit the independence systems that can be realised and thereby constrain cost-congestion lower bounds.

\appendix
\section{Reproducibility protocol}

The accompanying package contains a provenance-labelled reproduction of Rybin's triangle instance, the \(C_5\) JSON instance, the exhaustive verifiers, and the symbolic parameter checker. From the package directory, run
\begin{verbatim}
./verify_all.sh
\end{verbatim}
The suite enumerates all paths and routings of both finite certificates, checks the parameter identities and inequalities for \(k=1,\ldots,200\), and runs adversarial mutation tests. It requires no third-party Python package.

\IfFileExists{article.bbl}{

}{
  \bibliographystyle{plain}
  \bibliography{references}

\begin{thebibliography}{1}

\bibitem{dgg1999}
Yefim Dinitz, Naveen Garg, and Michel~X. Goemans.
\newblock On the single-source unsplittable flow problem.
\newblock {\em Combinatorica}, 19(1):17--41, 1999.

\bibitem{golumbic1985}
Martin~Charles Golumbic and Robert~E. Jamison.
\newblock The edge intersection graphs of paths in a tree.
\newblock {\em Journal of Combinatorial Theory, Series B}, 38:8--22, 1985.

\bibitem{almoghrabi2025}
Mohammed Majthoub~Almoghrabi, Martin Skutella, and Philipp Warode.
\newblock Integer and unsplittable multiflows in series-parallel digraphs.
\newblock In {\em Integer Programming and Combinatorial Optimization}, volume
  15620 of {\em Lecture Notes in Computer Science}, pages 427--441. Springer,
  2025.

\bibitem{morell2022}
Sarah Morell and Martin Skutella.
\newblock Single source unsplittable flows with arc-wise lower and upper
  bounds.
\newblock {\em Mathematical Programming}, 192(1):477--496, 2022.

\bibitem{rybintranscript2026}
Dmitry Rybin.
\newblock Counterexample to dinitz conjecture.
\newblock \url{https://chatgpt.com/share/6a60b2eb-0b64-83ee-9c76-7931ca1de063},
  July 2026.
\newblock Shared GPT-5.6 Pro investigation transcript; accessed 27 July 2026.

\bibitem{rybin2026}
Dmitry Rybin.
\newblock Dinitz--garg--goemans conjecture is false.
\newblock \url{https://x.com/DmitryRybin1/status/2079904005652893709}, July
  2026.
\newblock Public announcement on X; URL resolved through X's oEmbed service on
  27 July 2026.

\bibitem{suvak2021}
Zeynep {\c{S}}uvak, {\.I}~Kuban Alt{\i}nel, and Necati Aras.
\newblock Minimum cost flow problem with conflicts.
\newblock {\em Networks}, 78:421--442, 2021.

\bibitem{swamy2026}
Chaitanya Swamy, Vera Traub, Laura Vargas~Koch, and Rico Zenklusen.
\newblock Unsplittable cost flows from unweighted error-bounded variants.
\newblock In {\em Proceedings of the 2026 SIAM Symposium on Simplicity in
  Algorithms}, pages 512--523, 2026.

\bibitem{traub2026}
Vera Traub, Laura Vargas~Koch, and Rico Zenklusen.
\newblock Single-source unsplittable flows in planar and bounded-genus graphs.
\newblock {\em Mathematical Programming}, 2026.

\end{thebibliography}
}

\end{document}